\newcommand{\nn}{\nonumber}
\newcommand{\qij}{q_{ij}}
\newcommand{\qjl}{q_{j\ell}}
\newcommand{\Pro}{\mathbb{P}}
\newcommand{\E}{\mathbb{E}}
\newcommand{\prodb}[2]{\prod_{#1\sim #2}(1-\beta\mathbbm{1}_{X_#1(t)})}
\newcommand{\vp}{\vspace{3pt}}			
\apptocmd{\thebibliography}{\scriptsize}{}{}
\title{\LARGE \bf
Improved Bounds on the Epidemic Threshold of Exact SIS Models on Complex Networks
}
\author{Navid Azizan Ruhi, Christos Thrampoulidis and Babak Hassibi
\thanks{This work was supported in part by the National Science Foundation under grants CNS-0932428, CCF-1018927, CCF-1423663 and CCF-1409204, by a grant from Qualcomm Inc., by NASA’s Jet Propulsion Laboratory through the President and Director’s Fund, by King Abdulaziz University, and by King Abdullah University of Science and Technology.}
\thanks{The authors are with the California Institute of Technology, Pasadena, CA 91125, USA
        {\tt\small \{azizan, cthrampo, hassibi\}@caltech.edu}}%
}
\begin{document}

\newtheorem{theorem}{Theorem}
\newtheorem{lemma}[theorem]{Lemma}
\newtheorem{proposition}[theorem]{Proposition}
\newtheorem{corollary}[theorem]{Corollary}

\maketitle
\thispagestyle{empty}
\pagestyle{empty}

\begin{abstract}

The SIS (susceptible-infected-susceptible) epidemic model on an arbitrary network, without making approximations, is a $2^n$-state Markov chain with a unique absorbing state (the all-healthy state). This makes analysis of the SIS model and, in particular, determining the threshold of epidemic spread quite challenging. It has been shown that the exact marginal probabilities of infection can be upper bounded by an $n$-dimensional linear time-invariant system, a consequence of which is that the Markov chain is ``fast-mixing'' when the LTI system is stable, i.e. when $\frac{\beta}{\delta}<\frac{1}{\lambda_{\max}(A)}$ (where $\beta$ is the infection rate per link, $\delta$ is the recovery rate, and $\lambda_{\max}(A)$ is the largest eigenvalue of the network's adjacency matrix). This well-known threshold has been recently shown not to be tight in several cases, such as in a star network. In this paper, We provide tighter upper bounds on the exact marginal probabilities of infection, by also taking pairwise infection probabilities into account. Based on this improved bound, we derive tighter eigenvalue conditions that guarantee fast mixing (i.e., logarithmic mixing time) of the chain. We demonstrate the improvement of the threshold condition by comparing the new bound with the known one on various networks and epidemic parameters.

\end{abstract}

\section{Introduction}

The mathematical modeling and analysis of epidemic spread is of great importance in understating dynamical processes over complex networks (e.g. social networks) and has attracted significant interest from different communities in recent years. The study of epidemics plays a key role in many areas beyond epidemiology \cite{bailey1975mathematical}, such as viral marketing \cite{phelps2004viral,richardson2002mining}, network security \cite{alpcan2010network,acemoglu2013network}, and information propagation \cite{jacquet2010information,cha2009measurement}. Although there is a huge body of work on epidemic models, classical ones mostly neglect the underlying network structure and assume a uniformly mixed population, which is obviously far from reality. However, in recent years more realistic networked models have been introduced, and many interesting results are now known \cite{nowzari2016analysis,pastor2015epidemic}.

In the simplest case (the binary-state or SIS model) each node is in one of two different states: susceptible (S) or infected (I). During any time interval, each susceptible (healthy) node has a chance of being independently infected by any of its infected neighbors (with probability $\beta$). Further, during any time interval, each infected node has a chance of recovering (with probability $\delta$) and becoming susceptible again. For a network with $n$ nodes, this yields a Markov chain with $2^n$ states, which is referred to as the exact or ``stochastic'' model. Since analyzing this model is quite challenging, most researchers have resorted to $n$-dimensional linear and nonlinear approximations (the most common being the “mean-field” approximation), which are sometimes called ``deterministic'' models. This paper focuses on improving known bounds on the exact model.

The spreading process can be considered either as a discrete-time Markov chain or a continuous-time one. Although the discrete-time model is sometimes argued to be more realistic \cite{arenas2010discrete,ahn2014random}, there is no fundamental difference between the two, and similar results have been shown for both. We focus on the discrete-time Markov chain here.

It is known that these epidemic models exhibit a phase transition behavior at a certain threshold \cite{castellano2010thresholds,barrat2008dynamical} , i.e., once the effective infection rate $\tau=\frac{\beta}{\delta}$ approaches a critical value $\tau_c$ \cite{nowzari2016analysis} the epidemic appears not to die out. We should remark that the Markov chain has a unique absorbing state, which is the all-healthy state, because once the system reaches this state it remains there forever since there are no infected nodes to propagate infections. This means that if we wait long enough the epidemic will eventually die out, which may seem to be odd at first. However, what this means is that the question of the epidemic dying out is not interesting; what is interesting is the question of {\em how long it takes for the epidemic to die out?} In particular, if the mixing time of the Markov chain is exponentially large one will not see it die out in any reasonable time. Therefore the right question to ask is what is the mixing time of the Markov chain (or, equivalently, its mean-time-to-absorption); it turns out that the threshold $\tau_c$ corresponds to the phase transition between ``slow mixing" (exponential time) and ``fast mixing" (logarithmic time) of the MC \cite{draief2010epidemics,van2013decay,van2014upper}.

The epidemic threshold (critical value) of general networks is still an open problem. However,  lower- and upper-bounds have been found using different techniques \cite{draief2010epidemics,van2014upper}. The best known \emph{lower-bound} is $1/\lambda_{\max}(A)$, i.e. the inverse of the leading eigenvalue of the adjacency matrix, which is derived by \emph{upper-bounding} the marginal probabilities of infection and using a linear dynamical system. In fact, this method relies on keeping track of $n$ variables which are upper bounds on the marginal probability of infection for any of the nodes. In this paper, we focus on improving this upper-bound on the infection probabilities and ultimately the lower-bound on the epidemic threshold. The key idea, is to maintain the ``pairwise'' probabilities of nodes' infections, in addition to the marginals. This comes at the cost of increased, yet still perfectly feasible, computation. There is a trade-off between the tightness of the bound and the complexity, and in theory if one takes into account all marginals, pairs, triples, and higher order terms, we get back to the original $2^n$-state Markov chain.

We first briefly review the known bound with marginals, and show a simple alternative approach for deriving it. We then move on to pairs and use the machinery developed in Section~\ref{sec:marginals} to derive tighter bounds on the probabilities and connect it with the mixing time of the Markov chain (Sections \ref{sec:p_ij} and \ref{sec:q_ij}). Finally, we demonstrate the improvement of the bounds through extensive simulations (Section~\ref{sec:simulations}), and conclude with future directions.

\section{The Markov Chain and Marginal Probabilities of Infection}\label{sec:marginals}

\begin{figure}[tpb]
  \centering
    \includegraphics[width=0.4\columnwidth]{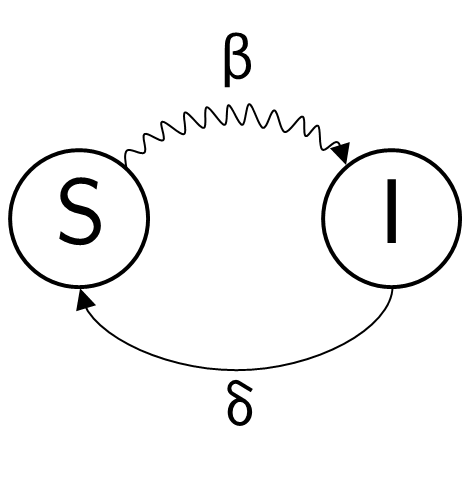}
      \caption{State diagram of a single node in the SIS model, and the transition rates. Wavy arrow represents exogenous (neighbor-based) transition. $\beta:$ probability of infection per infected link, $\delta:$ probability of recovery.}
      \label{fig}
\end{figure}

Let $G = (V,E)$ be an arbitrary connected undirected network with $n$ nodes, and with adjacency matrix $A$. Each node can be in a state of health, represented by ``0'', or a state of infection, represented by ``1''. The state of the entire network can be represented by a binary n-tuple $\xi(t)=(\xi_1(t), \cdots, \xi_n(t)) \in \{0,1 \}^n$, where each of the entries represents the state of a node at time $t$, i.e. $i$ is infected if $\xi_i(t)=1$ and it is healthy if $\xi_i(t)=0$.

Given the current state $\xi(t)$, the infection probability of each node in the next step is determined independently, and therefore the transition matrix $S$  of this Markov Chain has elements $S_{X,Y}=\mathbb{P}(\xi(t+1)=Y|\xi(t)=X)$ of the following form:
\begin{equation}
\mathbb{P}(\xi(t+1)=Y|\xi(t)=X) = \prod_{i=1}^n \mathbb{P}(\xi_i(t+1)=Y_i|\xi(t)=X) , \label{MC0}
\end{equation}
for any two state vectors $X,Y \in \{0,1\}^n$.

As mentioned before, a healthy node can receive infection from any of its infected neighbors independently with probability $\beta$ per infected link, and an infected node can recover from the disease with probability $\delta$. That is
\begin{align}
&\mathbb{P}(\xi_i(t+1)=Y_i|\xi(t)=X) \nonumber \\
&= \left\{
\begin{array}{rl}
(1-\beta)^{m_i} & \text{if } (X_i,Y_i)=(0,0),  |N_i \cap \mathbb{S}(X)| = m_i,\\
1- (1-\beta)^{m_i} & \text{if } (X_i,Y_i)=(0,1),  |N_i \cap \mathbb{S}(X)| = m_i,\\
\delta & \text{if } (X_i,Y_i)=(1,0),  |N_i \cap \mathbb{S}(X)| = m_i,\\
1-\delta & \text{if } (X_i,Y_i)=(1,1),  |N_i \cap \mathbb{S}(X)| = m_i,
\end{array} \right. \label{MC1}
\end{align}
where $\mathbb{S}(X)$ is the support of $X \in \{0,1\}^n$, i.e. $\mathbb{S}(X) = \{ i : X_i = 1 \}$, and $N_i$ is the set of neighbors of node $i$.

Eqs.~(\ref{MC0}, \ref{MC1}) completely define the $2^n\times 2^n$ transition matrix of the Markov chain, which determines the evolution of the $2^n$ states over time. Of course with this, we have the joint probability of all the nodes, and we can compute the probability of any desired combination by marginalizing out the rest. In particular, one can compute the probability of each node $i$ being infected at time $t+1$ (denoted by $p_i(t+1)$), which is a function of all joint probabilities of the states at time $t$. Since there are only $n$ such variables, the dimension would be significantly reduced if one could ``bound'' or ``approximate'' that function by something that includes marginals $p_i(t)$ only. This way we obtain a recursion which relates the marginals at time $t+1$ to those of time $t$, and indeed we have a system with only $n$ states rather that $2^n$ states. Approximations per se are not very interesting because they do not provide any \emph{guarantee} on the behavior of the exact Markov chain model. What is more important is whether one can obtain a bound on these true probabilities, which can guarantee for example fast extinction of disease. The most common upper-bound, which has been shown to be the tightest linear upper-bound with marginals only (using a linear programming technique) \cite{ahn2014mixing,azizan2015sirs} is:
\begin{equation} \label{eq:pi_classic}
p_i(t+1) \leq (1-\delta)p_i(t) + \beta \sum_{j \in N_i} p_j(t)
\end{equation}
for all $i=1,\dots,n$. Defining $p(t) = (p_1(t),\dots,p_n(t))^T$, this can be written in a matrix form as
\begin{equation}\label{eq:pi_bound}
p(t+1) \preceq M p(t) ,
\end{equation}
where
\begin{equation}
M=(1-\delta)I_n + \beta A.
\end{equation}

\subsection{An Alternative Bounding Technique}\label{sec:alt}
The derivation of \eqref{eq:pi_classic} in \cite{ahn2014mixing,azizan2015sirs} involves a linear programming technique. In this paper, we provide an alternative technique to bound the infection probabilities using indicator variables and conditional expectation, which is more intuitive and more direct. Importantly, as will be shown later, this technique can be used to obtain tighter bounds on the exact probabilities of infections using pairwise inflectional probabilities. Before that, it is instructive to derive  \eqref{eq:pi_classic} using this alternative approach.

Let $i\in V$. We start by conditioning on the state of the same node $i$ at time $t$, as follows:
\begin{align}
p_i(t+1)=&\mathbb{P}(X_i(t+1)=1 | X_i(t)=1) \mathbb{P}(X_i(t)=1)\notag\\ 
&+\mathbb{P}(X_i(t+1)=1 | X_i(t)=0) \mathbb{P}(X_i(t)=0)\nn.
\end{align}
The probability that an infected node remains infected is $1-\delta$, and the probability that a susceptible node does not receive infection from an infected neighbor is $1-\beta$. We denote $j$ neighbor of $i$ by $j\sim i$. The expression above can be written as
\begin{align}
p_i(t+1)&=(1-\delta)p_i(t)+\label{eq:up1} \\
&\mathbb{E}_{X_{-i}(t)|X_i(t)=0}\bigg[1-\prod_{j\sim i}(1-\beta\mathbbm{1}_{X_j(t)})\bigg]\mathbb{P}(X_i(t)=0)\nn .
\end{align}
The conditional expectation is on the joint probability of all nodes other than $i$ (denoted by $X_{-i}$) given node $i$ being healthy ($X_i=0$). Of note, this expression is still exact, and we have not done any approximation yet. It can be easily checked that $$\prod_{j\sim i}(1-\beta\mathbbm{1}_{X_j(t)})\geq 1 - \beta\sum_{j\sim i}\mathbbm{1}_{X_j(t)}.$$ Combining this with \eqref{eq:up1}, yields the desired upper bound
\begin{align}
p_i(t+1)&\leq (1-\delta)p_i(t)+\beta\sum_{j\sim i} \Pro(X_i(t)=0, X_j(t)=1) \label{eq:bound}\\&
\leq (1-\delta)p_i(t)+\beta\sum_{j\sim i}p_j(t)\notag.
\end{align}

\subsection{Connection to Mixing Time of the Markov Chain}\label{sec:connection}
Up to this point, we just talked about bounding the marginal probabilities of infection, and it is not clear how a bound on the marginal probabilities relates to the mixing time of the Markov chain. To establish this connection, let us start from the definition of mixing time \cite{levin2009markov}:
\begin{equation}
t_{mix}(\epsilon)=\min \{t: \sup_\mu \| \mu S^t - \pi \|_{TV} \leq \epsilon \} ,  \label{Eq:mixingdefn}
\end{equation}
where $\mu$ is any initial probability distribution defined on the state space and $\pi$ is the stationary distribution; $\|\mu-\mu' \|_{TV}$ is the total variation distance of any two probability measures $\mu$ and $\mu'$, and is defined by
\begin{equation}\nn
\| \mu -\mu' \|_{TV} = \frac{1}{2} \sum_x | \mu(x) - \mu'(x) | ,
\end{equation}
where $x$ is any possible state in the probability space. In fact $t_{mix}(\epsilon)$ is the minimum time instant for which the distance between the stationary distribution and the probability distribution at time $t$ from any initial distribution is smaller than or equal to $\epsilon$. Roughly speaking, the mixing time measures how fast the initial distribution converges to the limit distribution, which in our case means how quickly the epidemic dies out.

Since in the stationary distribution the all-healthy state has probability 1, it can be shown \cite{ahn2014mixing} that
\begin{align}
\sup_\mu \| \mu S^t - \pi \|_{TV} 
&=\mathbb{P}\left( \substack{\text{some nodes are infected at time $t$} \mid \\\text{all nodes were infected at time $0$}} \right)
\end{align}
which highlights the fact that the worst initial distribution (i.e. the $\mu$ that maximizes above quantity) is the all-infected state.
Now for any $t< t_{mix}(\epsilon)$ we have
\begin{align}
\epsilon &<\mathbb{P}\left( \substack{\text{some nodes are infected at time $t$} \mid \\\text{all nodes were infected at time $0$}} \right)\nn\\
 &\leq \sum\limits_{i=1}^n \mathbb{P}\left( \substack{\text{node $i$ is infected at time $t$} \mid\qquad \\\text{all nodes were infected at time $0$}} \right)\nn\\
 &= 1_{n}^T p(t)\qquad \text{given that $p(0)=1_n$} , \label{eq:epsilon}
\end{align}
where we have used the union bound, and $1_n$ denotes the all-ones vector of size $n$.

Back to the upper-bound on the marginals \eqref{eq:pi_bound}, we get $1_n^Tp(t)\leq 1_n^TMp(t-1)$. Furthermore, since $M$ has non-negative entries (we write this as $M\geq 0$) we can ``propagate'' the bound to find that
\begin{equation}
1_n^Tp(t)\leq 1_n^TMp(t-1)\leq 1_n^T M^2p(t-2)\leq\dots\leq 1_n^T M^tp(0).\nn
\end{equation}
As a result, for any $t< t_{mix}(\epsilon)$
\begin{equation}
\epsilon<1_n^TM^t1_n\leq n (\rho(M))^t ,
\end{equation}
since $M$ is non-negative and symmetric, and $\lambda_{\max}(M)=\rho(M)$, where $\rho(M)$ is the spectral radius of $M$.

When $\rho(M)<1$ (or equivalently $1-\delta+\beta\lambda_{max}(A)<1$), it follows that $t< \frac{\log \frac{n}{\epsilon}}{-\log \rho(M)}$ for all $t< t_{mix}(\epsilon)$. This implies the well-known result that when $\beta/\delta<1/\lambda_{\max}(A)$ then $t_{mix}(\epsilon)\leq \frac{\log \frac{n}{\epsilon}}{-\log \rho(M)}=O(\log n)$.

We should note here that if $M$ was not symmetric (as we will encounter such instances in the next section), it can be shown by an appeal to the Lyapunov equation that if $\rho(M)<1$ then for all $t<t_{mix}(\epsilon)$ there exists $0<\eta<1$ such that  $\epsilon\leq \eta^tO(\mathrm{poly}(n)) $, from which it follows directly that the mixing time is logarithmic in $n$. To see that, note that $\rho(M)<1$ implies that there exists a positive definite matrix $P\succ0$ such that $M^TPM-P\prec0$. Letting $P^{1/2}$ denote the unique positive square root of $P$ and $N:=P^{1/2}MP^{-1/2}$, it follows easily that 
$N^TN \prec I_d $, or equivalently $\eta:=\|N\|_2<1$. (Here, $\|N\|_2$ denotes the spectral norm of $N$.) Defining $y:=P^{1/2}1_n$ and $x:=P^{-1/2}1_n$ we get
$
1_n^TM^t1_n=x^TN^ty\leq \|x\|_2\eta^t\|y\|_2 \leq n\eta^t \|P^{1/2}\|_2\|P^{-1/2}\|_2.
$

\section{Pairwise Probabilities ($p_{ij}$)}\label{sec:p_ij}

%
In section~\ref{sec:marginals} we showed how a bound on marginal probabilities of infection can be obtained, and how this bound translates to the threshold condition for fast mixing of the Markov chain. As mentioned before, the bound \eqref{eq:pi_bound} has been proved to be the tightest linear bound one can get with marginals; A natural idea to improve this bound is to go to higher order terms (i.e. pairs, triples, etc.). In principle, maintaining higher order terms is advantageous because it means keeping more information from the original chain, but of course at the cost of increased complexity. We define the pairwise probability of infection of two nodes, in addition to the marginals, as follows.
For $(i,j)\in E$,
$$
p_{ij}(t) := \Pro( X_i(t) = 1 , X_j(t) = 1).
$$
Note that out of the $n\choose 2$ possible pairs of nodes, we only consider the ones that correspond to edges in the graph. Based on this definition, $\Pro( X_i(t) = 0 , X_j(t) = 1) = p_j(t) - p_{ij}(t)$ and it follows easily from \eqref{eq:bound} that
\begin{equation}
p_i(t+1) \leq (1-\delta)p_i(t) + \beta \sum_{j \sim i} p_j(t)  - \beta \sum_{j \sim i} p_{ij}(t) \label{eq:pij_pi}.
\end{equation}
Of course, this bound is at least as tight as the one in \eqref{eq:pi_classic}. Now, in order to strictly improve upon the latter, we need a lower bound on the pairwise infection probabilities at time $t+1$ in terms of marginals and pairwise probabilities at time $t$, which is derived next.

\subsection{A Lower Bound on the $p_{ij}$'s}

To construct a lower bound on the pairwise marginal probabilities $p_{ij}(t+1)$, we use the same approach as was introduced in Section \ref{sec:alt}, but this time applied to pairwise infection probabilities. 

Let $(i,j)\in E$ and $t\geq 0$.
We first expand $p_{ij}(t+1)$ as follows
\begin{align*}
\sum_{\substack{x\in\{0,1\}\\ y\in\{0,1\}}} \Pro(X_i(t+1)=1,X_j(t+1)=1 , X_i(t)=x,X_j(t)=y).
\end{align*}
For convenience, denote each one of the summands above by $s_{xy}$. Also, let $c_{xy} $ represent the corresponding conditional probability $\Pro(X_i(t+1)=1,X_j(t+1)=1 | X_i(t)=x,X_j(t)=y).$
In what follows, we lower bound each one of $s_{xy}$'s.  We write $\E_{xy}$ for the conditional expectation $\E_{X_{-i,-j}(t) | \{X_i(t)=x,X_j(t)=y\}}$.

\vp
\noindent$\bullet ~$\underline{(x=0,y=0):} Trivially, $s_{00}\geq 0$.

\vp
\noindent$\bullet ~$\underline{(x=0,y=1):} As before, the probability of not getting infected from each infected neighbor is $(1-\beta)$, and the probability that an infected node remains infected is $1-\delta$. Therefore
\begin{align}
c_{01}  = \E_{01}\big[ {(1-\prodb{k}{i})} (1-\delta) \big] \nn.
\end{align}
Since $j\sim i$ and $0\leq\beta\leq 1$, it follows that $\prodb{k}{i}\leq (1-\beta\mathbbm{1}_{X_j(t)})$. Hence $c_{01}\geq \beta(1-\delta)\E_{01} \mathbbm{1}_{X_j(t)} $, which eventually gives
\begin{align}s_{01} &\geq \beta(1-\delta) \Pro(X_j(t)=1,X_i(t)=0) \nn\\
&\qquad\qquad\geq  \beta(1-\delta) p_j(t) -   \beta(1-\delta) p_{ij}(t).\nn\end{align}

\vp
\noindent$\bullet ~$\underline{(x=1,y=0):} By symmetry, the exact same argument as above implies $$s_{10} \geq  \beta(1-\delta) p_i(t) -   \beta(1-\delta) p_{ij}(t).$$

\vp
\noindent$\bullet ~$\underline{(x=1,y=1):} Clearly
$c_{11}  = (1-\delta)^2, \nn$
which gives 
$$s_{11} \geq (1-\delta)^2p_{ij}(t).$$

Adding up all the above terms yields the following lower bound for all $(i,j)\in E$ and $t\geq 0$:
\begin{align}
p_{ij}(t+1) &\geq (1-\delta)\beta(p_i(t) + p_j(t)) \notag\\
&\qquad\quad+ (1-\delta)(1-\delta-2\beta)p_{ij}(t) \label{eq:pij_pij}. 
\end{align}

\subsection{Back to the Mixing Time}
In order to express Eqs. \eqref{eq:pij_pi} and \eqref{eq:pij_pij} for all $i$ and $j$'s together in a matrix form, recall the definition $p(t) = (p_1(t),\dots,p_n(t))^T$. Further, let us define $p_E(t)\in \mathbb{R}^{|E|}$ as the vector of pairwise infection probabilities, i.e.,  $ p_E(t)= \mathrm{vec}(p_{ij}(t): (i,j)\in E)$. Note that $p_{ij}(t)=p_{ji}(t)$, so for each edge we only keep track of one of the two terms. Now we can write  Eqs. \eqref{eq:pij_pi}, \eqref{eq:pij_pij} as
\begin{equation}
 \begin{bmatrix} p(t+1) \\ -p_E(t+1) \end{bmatrix}
\preceq M' 
 \begin{bmatrix} p(t) \\ -p_E(t) \end{bmatrix} . \label{eq:pij_bound}
\end{equation}
The matrix $M'$, after a little bit of thought, can be expressed in the following way.
\begin{align}
M'=\begin{bmatrix}
(1-\delta)I_n + \beta A & \beta B  \\
-(1-\delta)\beta B^T & (1-\delta)(1-\delta-2\beta) I_{|E|}
\end{bmatrix} ,
\end{align}
where $B\in\mathbb{R}^{|V|\times |E|}$ happens to be the incidence matrix of $G$, which is formally defined as
$$
B_{i,e} = \begin{cases} 1 & \text{if $i$ is an endpoint of $e$},
\\ 0 & \text{otherwise},
\end{cases}
$$
for all $i\in V$ and $e\in E$. 

By accounting for pairwise infection probabilities, the bound derived in \eqref{eq:pij_bound} is tighter when compared to the one in \eqref{eq:pi_bound}. In order 
to connect this to the the mixing time of the underlying Markov chain, observe that
%
%
\begin{align}
1_n^Tp(t) = \begin{bmatrix} 1_n^T & 0_{|E|}^T \end{bmatrix} \begin{bmatrix} p(t) \\ -p_E(t) \end{bmatrix}.\nn
\end{align}
Applying \eqref{eq:pij_bound} to this gives,
\begin{align}
1_n^Tp(t) \leq \begin{bmatrix} 1_n^T & 0_{|E|}^T \end{bmatrix} M' \begin{bmatrix} p(t-1) \\ -p_E(t-1) \end{bmatrix}. \label{eq:pij_t-1}
\end{align}
This step is possible because the entries of the vector  $\begin{bmatrix} 1_n^T & 0_{|E|}^T \end{bmatrix}$ are all non-negative, which guarantees that the signs of all the $n+|E|$ inequalities in \eqref{eq:pij_bound} are preserved. With this note, it becomes clear that in order to be able to propagate the bounds for the remaining time instances $t-2,t-3,\ldots,0$, a sufficient condition would be
\begin{align}\label{eq:pij_condition}
\begin{bmatrix} 1_n^T & 0_{|E|}^T \end{bmatrix} (M')^t \geq 0, \quad\text{for all } t\geq 1.
\end{align}

Provided that \eqref{eq:pij_condition} holds, we can continue with the sequence of bounds after \eqref{eq:pij_t-1}, which results in
\begin{align}
1_n^Tp(t) \leq \begin{bmatrix} 1_n^T & 0_{|E|}^T \end{bmatrix} (M')^t \begin{bmatrix} p(0) \\ -p_E(0) \end{bmatrix}. \label{eq:pij_0}
\end{align}
Subsequently, the same argument as in \ref{sec:connection} concludes the following result.
\begin{theorem}\label{thm:p_ij}
Assume that \eqref{eq:pij_condition} holds. If $\rho(M')<1$, then the mixing time of the Markov chain whose transition matrix $S$ is described by Eqs. \eqref{MC0} and \eqref{MC1} is $O(\log n)$.
\end{theorem}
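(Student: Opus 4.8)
The plan is to mirror the mixing-time argument already carried out in Section~\ref{sec:connection} for the marginal bound, now applied to the augmented system \eqref{eq:pij_bound}. The starting point is the chain of inequalities culminating in \eqref{eq:pij_0}, which, under the standing assumption \eqref{eq:pij_condition}, gives us
\begin{equation}\nn
1_n^Tp(t) \leq \begin{bmatrix} 1_n^T & 0_{|E|}^T \end{bmatrix} (M')^t \begin{bmatrix} p(0) \\ -p_E(0) \end{bmatrix}.
\end{equation}
Recalling from \eqref{eq:epsilon} that for any $t<t_{mix}(\epsilon)$ we have $\epsilon < 1_n^Tp(t)$ when $p(0)=1_n$, it suffices to bound the right-hand side above by a quantity of the form $\mathrm{poly}(n)\,\eta^t$ with $0<\eta<1$. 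First I would absorb both the test vector $\begin{bmatrix} 1_n^T & 0_{|E|}^T \end{bmatrix}$ and the initial-condition vector $\begin{bmatrix} p(0)^T & -p_E(0)^T \end{bmatrix}^T$ into fixed vectors $u,v\in\mathbb{R}^{n+|E|}$, so the quantity of interest is simply $u^T (M')^t v$.

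The crucial observation is that $M'$ is \emph{not symmetric}, so the clean spectral-radius bound $\rho(M')^t$ used in the marginal case is unavailable. This is exactly the situation anticipated in the final paragraph of Section~\ref{sec:connection}, and I would invoke that Lyapunov-based argument verbatim. Since $\rho(M')<1$ by hypothesis, there exists $P\succ 0$ solving $(M')^TP M'-P\prec 0$; setting $N:=P^{1/2}M'P^{-1/2}$ one gets $\eta:=\|N\|_2<1$. Then with $y:=P^{1/2}v$ and $x:=P^{-1/2}u$ we obtain
\begin{equation}\nn
u^T(M')^t v = x^T N^t y \leq \|x\|_2\,\eta^t\,\|y\|_2 \leq \mathrm{poly}(n)\,\eta^t,
\end{equation}
where the polynomial factor collects $\|u\|_2\|v\|_2\|P^{1/2}\|_2\|P^{-1/2}\|_2$ and we use $\|u\|_2,\|v\|_2 = O(\mathrm{poly}(n))$ (indeed $u$ has $n$ unit entries and $v$ has entries bounded by $1$). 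Combining with $\epsilon < u^T(M')^t v$ yields $\epsilon \leq \mathrm{poly}(n)\,\eta^t$ for all $t<t_{mix}(\epsilon)$, and solving for $t$ gives $t_{mix}(\epsilon) \leq \frac{\log(\mathrm{poly}(n)/\epsilon)}{-\log\eta} = O(\log n)$, which is the claim.

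I expect the substantive content to lie not in these routine linear-algebra manipulations but in correctly justifying the passage from the single-step bound \eqref{eq:pij_t-1} to the $t$-step bound \eqref{eq:pij_0}. The difficulty is that $M'$ has negative entries (for instance the block $-(1-\delta)\beta B^T$, and the diagonal block when $1-\delta-2\beta<0$), so one cannot simply propagate componentwise inequalities by left-multiplying: flipping signs could reverse an inequality. This is precisely why the sign condition \eqref{eq:pij_condition} is imposed as a hypothesis — it guarantees that the covector $\begin{bmatrix} 1_n^T & 0_{|E|}^T \end{bmatrix}(M')^k$ stays nonnegative for every $k\geq 1$, so that applying \eqref{eq:pij_bound} one more step at a time never reverses the overall inequality. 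In writing the proof I would state this propagation step carefully, noting that \eqref{eq:pij_condition} is exactly what licenses each successive application of the vector inequality \eqref{eq:pij_bound}, and only then turn to the Lyapunov estimate, which proceeds identically to the marginal case and can be cited rather than re-derived.
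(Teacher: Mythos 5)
Your proposal is correct and takes essentially the same route as the paper: it establishes \eqref{eq:pij_0} by propagating the componentwise bound \eqref{eq:pij_bound} one step at a time, using the nonnegativity condition \eqref{eq:pij_condition} to preserve the direction of the inequalities, and then applies the Lyapunov-equation argument from the end of Section~\ref{sec:connection} to handle the non-symmetric $M'$ and obtain $\epsilon \leq \mathrm{poly}(n)\,\eta^t$ with $\eta<1$. The paper's own proof is exactly this (it simply cites ``the same argument as in \ref{sec:connection}''), and your explicit discussion of why \eqref{eq:pij_condition} licenses the propagation matches the paper's reasoning.
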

From the old bound it was known that when $\rho(M)<1$ then the Markov chain is fast-mixing. Now in addition to that, this theorem states that also when $\rho(M')<1$ the Markov chain mixes fast. Of course, this is informative only when there is a case where $\rho(M)>1$ but $\rho(M')<1$. As it will be shown in the Section~\ref{sec:simulations} this is indeed the case.

Note that in the proof of Theorem~\ref{thm:p_ij} we used the assumption that \eqref{eq:pij_condition} holds. As it will be shown in the simulations section, in many cases this is a reasonable assumption. However, when the assumption does not hold we cannot appeal to this theorem. For this reason we propose another bound using an alternative pairwise probability, which does not require such a condition.

\section{An Alternative Pairwise Probability ($q_{ij}$)}\label{sec:q_ij}

As it was discussed above, when the assumption \eqref{eq:pij_condition} does not hold, we seek an alternative bound. Let us define
$$
q_{ij}(t) := \Pro( X_i(t) = 0 , X_j(t) = 1).
$$

We can use the same approach as before to obtain bounds for $p_i$, $q_{ij}$'s. Intuitively, lower bounding $p_{ij}(t+1)$ is equivalent with upper bounding $q_{ij}(t+1)$, and it turns out that it is what we need. The next lemma summarizes the bounds on these probabilities.
\begin{lemma} For all $i,j\in V$, $(i,j)\in E$ and $t\geq 0$, it holds that
\begin{align}
p_i(t+1) &\leq  (1-\delta) p_i(t) + \beta \sum_{\ell\sim i} q_{i\ell}(t) \label{eq:qij_pi} \\
q_{ij}(t+1) & \leq \delta(1-\delta) p_j(t) + (1-\delta)(1-\delta-\beta)q_{ij}(t) \nn
\\&\qquad\qquad+ \beta\delta q_{ji}(t) + \beta(1+\delta)\sum_{\substack{\ell\sim j\\ \ell\neq i}}q_{j\ell}(t) \label{eq:qij_qij}
\end{align}
\end{lemma}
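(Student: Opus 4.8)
The first inequality is essentially already in hand: specializing the exact intermediate bound \eqref{eq:bound} and recognizing that $\Pro(X_i(t)=0,X_j(t)=1)$ is precisely $\qij(t)$ turns \eqref{eq:bound} into \eqref{eq:qij_pi} after relabeling the summation index. So the entire content of the lemma is the second inequality, and the plan is to reprise the conditioning argument of the previous subsection, but now to \emph{upper} bound $q_{ij}(t+1)=\Pro(X_i(t+1)=0,X_j(t+1)=1)$. I would expand it as $\sum_{x,y\in\{0,1\}} s_{xy}$ over the four possible values of $(X_i(t),X_j(t))=(x,y)$, write $c_{xy}$ for the corresponding conditional probability $\Pro(X_i(t+1)=0,X_j(t+1)=1\mid X_i(t)=x,X_j(t)=y)$, and bound each $c_{xy}$ separately. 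The crucial structural fact, inherited from \eqref{MC0}, is that given the full state at time $t$ the fate of $i$ and the fate of $j$ are independent, so each $c_{xy}$ factors as a conditional expectation of (the event ``$i$ ends healthy'') times (the event ``$j$ ends infected'').

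Concretely, the factor for ``$i$ ends healthy'' is $\prodb{k}{i}$ when $x=0$ and $\delta$ when $x=1$, while the factor for ``$j$ ends infected'' is $1-\prodb{k}{j}$ when $y=0$ and $1-\delta$ when $y=1$. The two elementary inequalities I will lean on are the trivial $\prodb{k}{i}\le 1$ and the same linear estimate $1-\prodb{k}{j}\le \beta\sum_{k\sim j}\mathbbm{1}_{X_k(t)}$ already used in Section~\ref{sec:alt}. Because $(i,j)\in E$, whenever a conditioned node is known to be infected it contributes a known factor $(1-\beta)$ to a product taken over the other node's neighborhood, and whenever it is known to be healthy it simply drops out of the corresponding sum; tracking these is exactly what produces the $q$-type terms. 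For instance, in the $(0,1)$ case the constraint $X_j=1$ pulls a $(1-\beta)$ out of $\prodb{k}{i}$, giving $c_{01}\le(1-\delta)(1-\beta)$ and hence $s_{01}\le(1-\delta)(1-\beta)\qij(t)$; in the $(0,0)$ case the constraint $X_i=0$ removes the $k=i$ term from the neighbor sum of $j$, yielding $s_{00}\le\beta\sum_{\ell\sim j,\,\ell\ne i}\qjl(t)$; and the $(1,1)$ case is deterministic, $c_{11}=\delta(1-\delta)$, so $s_{11}=\delta(1-\delta)p_{ij}(t)$.

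Summing the four contributions and substituting $p_{ij}(t)=p_j(t)-\qij(t)$ to eliminate $p_{ij}$ then collects into exactly the claimed coefficients. The step I expect to be the real obstacle is the asymmetric $(1,0)$ case: here $i$ is infected (so as a neighbor it already pushes its own $\beta$ toward $j$) while we still want $j$ to become infected, so I must factor $\prodb{k}{j}=(1-\beta)\prod_{k\sim j,\,k\ne i}(1-\beta\mathbbm{1}_{X_k(t)})$, peel off the constant $\beta$ coming from node $i$ via $1-(1-\beta)Q\le\beta+(1-Q)$ with $Q\le 1$, and only then apply the linear bound to $1-Q$. This is what generates the $\beta\delta\,\qji(t)$ term together with a second copy of $\beta\delta\sum_{\ell\sim j,\,\ell\ne i}\qjl(t)$; combining the latter with the $\beta\sum\qjl$ coming from the $(0,0)$ case is precisely what upgrades the coefficient to $\beta(1+\delta)$. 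Once the bookkeeping of these factored-out constants is carried out correctly, the remaining algebra matching each coefficient is routine.
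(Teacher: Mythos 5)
Your proposal is correct and follows essentially the same route as the paper: the same four-way conditioning on $(X_i(t),X_j(t))$, the same factorization of each $c_{xy}$, the same elementary bounds, and the same substitution $p_{ij}=p_j-q_{ij}$ at the end. Your handling of the $(1,0)$ case (peeling off the $(1-\beta)$ factor from node $i$ before linearizing) is a minor cosmetic variant of the paper's step, which applies the linear bound $1-\prod_{\ell\sim j}(1-\beta\mathbbm{1}_{X_\ell})\leq\beta\sum_{\ell\sim j}\mathbbm{1}_{X_\ell}$ to the whole product and then identifies the $\ell=i$ summand as $q_{ji}(t)$; both yield the identical $\beta\delta q_{ji}(t)+\beta\delta\sum_{\ell\sim j,\ell\neq i}q_{j\ell}(t)$.
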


\begin{proof}
Observe that \eqref{eq:qij_pi} is nothing but \eqref{eq:pij_pi} expressed in terms of $q_{ij}$'s. 
Now let $(i,j)\in E$. 
We first expand $q_{ij}(t+1)$ as follows
\begin{align*}
\sum_{\substack{x\in\{0,1\}\\ y\in\{0,1\}}} \Pro(X_i(t+1)=0,X_j(t+1)=1 , X_i(t)=x,X_j(t)=y).
\end{align*}
For convenience, denote each one of the summands above by $s_{xy}$. Also, let $c_{xy} $ denote the corresponding conditional probabilities $\Pro(X_i(t+1)=0,X_j(t+1)=1 | X_i(t)=x,X_j(t)=y).$
In what follows, we upper bound each one of the $s_{xy}$'s; this will immediately yield \eqref{eq:qij_qij}.  All expectations below are conditional on the events $\{X_i(t)=x,X_j(t)=y\}$ which is omitted for the sake of convenience.

\vp
\noindent$\bullet ~$\underline{(x=0,y=0):} Using the fact that
\begin{align}
c_{00}  = \E\big[ \underbrace{(\prodb{k}{i})}_{\leq 1} \underbrace{( 1 - \prodb{\ell}{j} )}_{\leq \beta\sum_{\ell\sim j}\mathbbm{1}_{X_\ell}} \big] , \nn
\end{align}
we find that $s_{00} \leq \beta\sum_{\ell\sim j}\Pro(X_i(t)=0,X_j(t)=0,X_\ell(t)=1)\leq \beta\sum_{\substack{\ell\sim j \\ \ell\neq i}}\qjl(t)$.

\vp
\noindent$\bullet ~$\underline{(x=0,y=1):} From
\begin{align}
c_{01}  = \E\big[ {(\prodb{k}{i})} (1-\delta) \big] \leq 1-\beta\mathbbm{1}_{X_j} , \nn
\end{align}
it follows that $s_{01} \leq (1-\beta)(1-\delta) \qij(t)$.

\vp
\noindent$\bullet ~$\underline{(x=1,y=0):} Using the fact that
\begin{align}
c_{10}  = \E\big[ \delta {( 1 - \prodb{\ell}{j} )} \big]\leq \beta\delta\sum_{\ell\sim j}\mathbbm{1}_{X_\ell} , \nn
\end{align}
we find that $s_{10} \leq \beta\delta\sum_{\ell\sim j}\Pro(X_i(t)=1,X_j(t)=0,X_\ell(t)=1)\leq \beta\delta\sum_{\substack{\ell\sim j}}\qjl(t)$.

\vp
\noindent$\bullet ~$\underline{(x=1,y=1):} Using the fact that
$c_{11}  = \delta(1-\delta), \nn$
we find that $s_{11} = \delta(1-\delta)\Pro(X_i(t)=1,X_j(t)=1)=\delta(1-\delta)(p_j-q_{ij})$.

\end{proof}
\vspace{-10pt}
Similar as before, by defining a vector $q_E(t)\in \mathbb{R}^{2|E|}$ as $q_E(t)= \mathrm{vec}(q_{ij}(t): (i,j)\in E)$, we can express \eqref{eq:qij_pi}, \eqref{eq:qij_qij} as
\begin{equation}
 \begin{bmatrix} p(t+1) \\ q_E(t+1) \end{bmatrix}
\preceq M'' 
 \begin{bmatrix} p(t) \\ q_E(t) \end{bmatrix}, \label{eq:qij_bound}
\end{equation}
for some appropriately defined square matrix $M''$ of size ${n+2|E|}$.
It is easy to see that if $1-\delta-\beta\geq 0$, then $M''\geq 0$, i.e. all entries of M are nonnegative. In particular, this implies that $M''$ satisfies \eqref{eq:pij_condition} and it only takes repeating the same argument as in \eqref{eq:pij_0} to conclude with the following theorem.

\begin{theorem}\label{thm:q_ij}
If $\rho(M'')<1$ and $1-\delta-\beta\geq0$, then the mixing time of the Markov chain whose transition matrix $S$ is described by Eqs. \eqref{MC0} and \eqref{MC1} is $O(\log n)$.
\end{theorem}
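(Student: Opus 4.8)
The plan is to recognize that Theorem~\ref{thm:q_ij} follows by feeding the matrix inequality \eqref{eq:qij_bound}, supplied by the preceding Lemma, into exactly the propagation-and-spectral-bound machinery of Section~\ref{sec:connection}; the only genuinely new ingredient is a sign check on $M''$. Concretely, I would proceed in three stages: (i) verify that the hypothesis $1-\delta-\beta\geq 0$ forces $M''\geq 0$ entrywise; (ii) use this to propagate \eqref{eq:qij_bound} from time $t$ back to time $0$ against the nonnegative test vector $\begin{bmatrix} 1_n^T & 0_{2|E|}^T\end{bmatrix}$; and (iii) invoke the Lyapunov-based estimate closing Section~\ref{sec:connection}, which is needed precisely because $M''$ is not symmetric, to convert $\rho(M'')<1$ into an $O(\log n)$ mixing-time bound.

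For stage (i) I would read off the four blocks of $M''$ directly from \eqref{eq:qij_pi} and \eqref{eq:qij_qij}. The $(p,p)$ block is $(1-\delta)I_n$, the $(p,q)$ block collects the coefficients $\beta$ multiplying the $q_{i\ell}(t)$ terms, and the $(q,p)$ block carries the coefficient $\delta(1-\delta)$; all of these are manifestly nonnegative for $0\leq\beta,\delta\leq 1$. The only entries whose sign is in question sit on the diagonal of the $(q,q)$ block, where the coefficient of $q_{ij}(t)$ is $(1-\delta)(1-\delta-\beta)$; since $1-\delta\geq 0$, this is nonnegative exactly when $1-\delta-\beta\geq 0$, which is the standing hypothesis (the remaining $(q,q)$ coefficients $\beta\delta$ and $\beta(1+\delta)$ are nonnegative unconditionally). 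Hence $M''\geq 0$, and, crucially, the analogue of \eqref{eq:pij_condition} now holds \emph{automatically}: since $\begin{bmatrix} 1_n^T & 0^T\end{bmatrix}\geq 0$ and $M''\geq 0$, every product $\begin{bmatrix} 1_n^T & 0^T\end{bmatrix}(M'')^t$ is nonnegative. This is exactly what frees the present theorem from the extra hypothesis \eqref{eq:pij_condition} that Theorem~\ref{thm:p_ij} required.

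With nonnegativity established, stage (ii) is mechanical: starting from $1_n^Tp(t)=\begin{bmatrix}1_n^T & 0^T\end{bmatrix}\begin{bmatrix}p(t)\\ q_E(t)\end{bmatrix}$ and applying \eqref{eq:qij_bound} repeatedly --- each step legitimate because the multiplying row vector stays nonnegative --- yields $1_n^Tp(t)\leq \begin{bmatrix}1_n^T & 0^T\end{bmatrix}(M'')^t\begin{bmatrix}p(0)\\ q_E(0)\end{bmatrix}$. Taking the worst-case all-infected initial state as in Section~\ref{sec:connection} gives $p(0)=1_n$ and $q_{ij}(0)=\Pro(X_i(0)=0,X_j(0)=1)=0$, so $q_E(0)=0$ and the right-hand side collapses to $w^T(M'')^t w$ with $w:=\begin{bmatrix}1_n^T & 0^T\end{bmatrix}^T$. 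Combined with the union bound \eqref{eq:epsilon}, this shows $\epsilon< w^T(M'')^t w$ for every $t<t_{mix}(\epsilon)$.

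The main obstacle --- and the only non-routine point --- is that $M''$ is genuinely non-symmetric, so $w^T(M'')^t w$ cannot be bounded by $\rho(M'')^t$ directly. Here I would reuse the Lyapunov argument closing Section~\ref{sec:connection}: $\rho(M'')<1$ guarantees a $P\succ 0$ with $(M'')^TPM''-P\prec 0$, and setting $N:=P^{1/2}M''P^{-1/2}$ gives $\eta:=\|N\|_2<1$. Writing $w^T(M'')^t w=(P^{-1/2}w)^TN^t(P^{1/2}w)$ and applying Cauchy--Schwarz bounds this by $\eta^t\|P^{-1/2}w\|_2\|P^{1/2}w\|_2\leq \eta^t\, n\,\|P^{1/2}\|_2\|P^{-1/2}\|_2=\eta^t\,O(\mathrm{poly}(n))$. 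Thus $\epsilon<\eta^t\,O(\mathrm{poly}(n))$ for all $t<t_{mix}(\epsilon)$, which rearranges to $t_{mix}(\epsilon)=O(\log n)$ and completes the argument.
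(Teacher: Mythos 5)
Your proposal is correct and follows essentially the same route as the paper: the paper likewise observes that $1-\delta-\beta\geq 0$ makes $M''\geq 0$ entrywise (so the analogue of \eqref{eq:pij_condition} holds automatically), then repeats the propagation argument of \eqref{eq:pij_0} and the Lyapunov-based spectral bound from Section~\ref{sec:connection}. Your block-by-block sign check and the observation that $q_E(0)=0$ for the all-infected start are exactly the details the paper leaves implicit.
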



\section{Experimental Results}\label{sec:simulations}

\begin{table}
\renewcommand{\arraystretch}{1.5}
\caption{Performance of the proposed bounds $M'$ and $M''$ in comparison with the old bound $M$. Boldface values show an improvement over the old bound. The signs next to $\rho(M')$ indicate whether the non-negativity condition (required for the proof) holds.}
\vspace{-23pt}
\label{table}
\begin{center}
\begin{tabular}{c c|c|c c|c}
 & & \begin{tabular}[x]{@{}c@{}}$\rho(M)=$\\$1+\epsilon$\end{tabular} &$\rho(M')$ & &$\rho(M'')$\\
\hline
Star & $\delta$=0.750, $\beta$=0.078&1.030 & \textbf{0.903}&+\\
 & $\delta$=0.500, $\beta$=0.053&1.030 & \textbf{0.828}&+\\
 & $\delta$=0.250, $\beta$=0.028&1.030 & 0.840&-&\textbf{0.968}\\
\hline
Cycle & $\delta$=0.750, $\beta$=0.390&1.030 & \textbf{0.817}&+\\
 & $\delta$=0.500, $\beta$=0.265&1.030 & 0.720&-&\textbf{0.945}\\
 & $\delta$=0.250, $\beta$=0.140&1.030 & 0.882&-&\textbf{0.942}\\
\hline
Star-line & $\delta$=0.750, $\beta$=0.174&1.030 & \textbf{0.872}&+\\
 & $\delta$=0.500, $\beta$=0.118&1.030 & 0.693&-&\textbf{0.958}\\
 & $\delta$=0.250, $\beta$=0.063&1.030 & 0.856&-&\textbf{0.955}\\
\hline
Clique & $\delta$=0.750, $\beta$=0.008&1.003 & \textbf{0.999}&+\\
 & $\delta$=0.500, $\beta$=0.005&1.003 & \textbf{0.998}&+&\\
 & $\delta$=0.250, $\beta$=0.003&1.003 & \textbf{0.999}&+&\\
\hline
Erd\H{o}s-R\'enyi & $\delta$=0.750, $\beta$=0.070&1.030 & \textbf{0.993}&+\\
 & $\delta$=0.500, $\beta$=0.048&1.030 & \textbf{0.977}&+\\
 & $\delta$=0.250, $\beta$=0.026&1.030 & \textbf{0.984}&+\\
\hline
Watts-Strogatz & $\delta$=0.750, $\beta$=0.077&1.030 & \textbf{0.991}&+\\
 & $\delta$=0.500, $\beta$=0.053&1.030 & \textbf{0.974}&+\\
 & $\delta$=0.250, $\beta$=0.028&1.030 & \textbf{0.982}&+\\

\end{tabular}
\end{center}
\vspace{-\baselineskip}
\end{table}

In this section, we demonstrate the performance of the proposed bounds by evaluating them on a variety of networks such as clique, Erd\H{o}s-R\'enyi, Watts-Strogatz, star graph, line graph, cycle, and star-line graph, with various parameters $\beta$ and $\delta$. As mentioned before, in order for any of the two threshold conditions proposed in sections \ref{sec:p_ij} and \ref{sec:q_ij} to be an improvement, we need to check if there are cases where the spectral radius of $M'$ or $M''$ is \emph{less than} 1, while the spectral radius of $M$ is \emph{greater than} 1 (or equivalently $\frac{\beta\lambda_{\max}(A)}{\delta}>1$). Indeed extensive simulations on our first bound ($M'$) suggest that not only are there such cases, but interestingly always $\rho(M')\leq\rho(M)$.

In order to compare $M'$ with $M$, we set $\rho(M)=1+\epsilon>1$ for some small value of $\epsilon$, and observe the value of $\rho(M')$. Table~\ref{table} lists the values of spectral radii for the three matrices. The positive sign next to $\rho(M')$ indicates that the non-negativity condition \eqref{eq:pij_condition} holds. For the cases that the condition holds (+), we can conclude that $M'$ has clearly improved. For the cases where the condition does not hold (-) we evaluate the second proposed bound $M''$, which again shows clear improvement over the old bound.

In order to demonstrate how tight the new condition is, Fig.~\ref{plot} plots the evolution of the epidemic over a star graph, for which the old bound is known not to be tight. The parameters in the two cases are $\delta=0.3, \beta=0.0130$, and  $\delta=0.3, \beta=0.0157$. It can be seen that while the value of $\frac{\beta\lambda_{\max}(A)}{\delta}$ is not informative (it is $1.93>1$ for the first case, and $2.33>1$ for the second one), the $\rho(M')$ condition is quite tight.

\begin{figure}[tp]
  \centering
    \includegraphics[width=0.9\columnwidth]{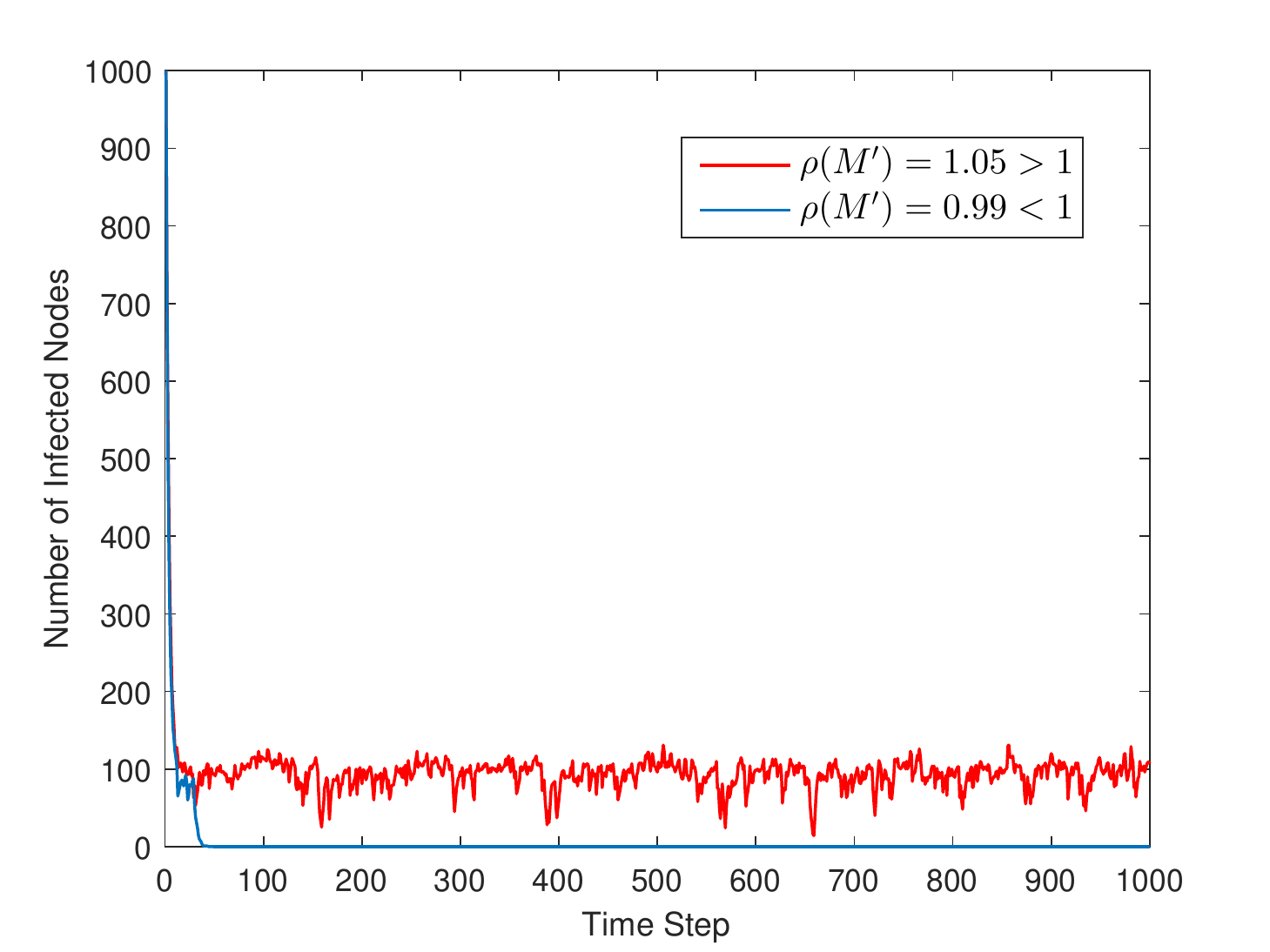}
      \caption{Evolution of the SIS epidemic over a star graph with $n=2000$ nodes, with two values of $\rho(M')$ below and above 1. When $\rho(M')=0.99<1$ we observe fast extinction of the epidemic (blue curve). It seems that the condition is also very tight since for $\rho(M')=1.05>1$ the epidemic does not die out (red curve).}
      \label{plot}
\vspace{-\baselineskip}
\end{figure}

\vspace{-8pt}
\section{Conclusion and Future Work}
In this paper, we first proposed a simple technique using conditional expectations to systematically construct bounds on the exact probabilities of infection, up to any desired order. Using this approach, we showed that keeping higher order terms (such as pairs) indeed helps in obtaining tighter bounds; specifically we derived a bound composed of both marginals and pairwise probabilities which has improved over the well-known bounds. Based on this new bound, we provided a new condition for fast mixing of the Markov chain to the all-healthy state, which through extensive simulations was shown to be tighter than the so-called $\frac{\beta\lambda_{\max}(A)}{\delta}<1$ condition.

Clearly one possible extension of this work would be to construct a bound consisting of marginals, pairs, and triples, which in theory should result in an improvement. In fact, keeping all higher order terms eventually takes us back to the $2^n$-state Markov chain. Therefore there is a trade-off between the complexity and the accuracy of the model. We should however note that going to triples may still be tractable, and one advantage of that would be not only to gain by improving the bound on the probability (same as here) but also to get an improvement in the fast-mixing condition of the chain in the sense that the bound \eqref{eq:epsilon} can be replaced by
$\epsilon<\sum_i p_i -\sum_{i,j}p_{ij}+\sum_{i,j,k}p_{ijk}$, which naturally includes all terms rather than just the marginals.
As a last comment, based on the simulations we conjecture that condition \eqref{eq:pij_condition} may be relaxed, and other future work may concern its proof.

\vspace{-4pt}
\section{Acknowledgment}
\small{
The authors would like to thank Ahmed Douik, Anatoly Khina and Ehsan Abbasi for insightful discussions on the subject.
}
\vspace{-4pt}
\bibliographystyle{IEEEtran}
\bibliography{IEEEabrv,references}

\end{document}